 \newtheorem{theorem}{Theorem}
\begin{document}
\title{The applications of EPR steering in quantum teleportation for two- or three-qubit system}
\author{Yi Fan}
\author{Liang Qiu}\email{lqiu@cumt.edu.cn}
\author{Chuanlei Jia}\email{jiachl@cumt.edu.cn}
\affiliation{School of Materials Science and Physics, China University of Mining and Technology, Xuzhou 221116, People's Republic of China}
\author{Yiyue Gu}\email{solargu@163.com}
\affiliation{Xuzhou No.1 People's Hospital, Xuzhou 221116, People's Republic of China}

\begin{abstract}
EPR steering is an important quantum resource in quantum information and computation. In this paper, its applications in quantum teleportation are investigated. First of all, the upper bound of the average teleportation fidelity based on the EPR steering is derived. When the receiver can only perform the identity or the Pauli rotation operations, the X-type states which violate the three-setting linear steering inequality could be used for teleportation. In the end, the steering observables and the average teleportation fidelities of the two-qubit reduced states for three-qubit pure states maintain the same ordering. The complementary relations between the steerable observables and the average teleportation fidelities for three-qubit pure states are also established.
\end{abstract}

\maketitle

\section{Introduction}\label{sec1}
Quantum steering, given by Schr\"{o}dinger in 1935 \cite{Schrodinger1935,Schrodinger1936} within the context of the Einstein-Podolsky-Rosen (EPR) paradox, captures the ability that the local measurements of Alice can remotely steer the state of Bob if an entangled state is shared between them. However, there was no significant progress in the research field of EPR steering until 2007, when an operational interpretation of quantum steering was given in the form of a task \cite{Wiseman,Jones}. In the task, Alice wanted to persuade Bob that they shared an entangled state even if the latter didn't trust her. Based on the task, the hierarchy among three quantum correlations, i.e., Bell nonlocality, EPR steering and entanglement, is established. The set of EPR steerable states is a subset of entangled states and a superset of Bell nonlocal states. Apart from having foundational significance, EPR steering has wide applications ranging from one-sided device-independent quantum key distribution \cite{Branciard2012}, secure quantum communication \cite{Reid2013}, secure quantum teleportation \cite{Reid2013,He}, detecting bound entangled states \cite{Moroder}, subchannel discrimination \cite{Piani}, randomness generation \cite{Law,Passaro,Skrzypczyk,Coyle} and self-testing of pure entangled states \cite{Supic,Gheorghiu,Goswami}. Using correlations, state assemblages and full information, a range of steering criteria are developed to quantify the amount of EPR steering \cite{Jevtic2014,Kogias}, such as the steerable weight \cite{Pusey,Skrzypczyk2014}, the steering robustness \cite{Piani}. Beyond these steering criteria, several inequalities in terms of the sufficient conditions of steerability are used to detect steerability \cite{Reid,Cavalcanti,Walborn,Schneeloch,Chen,Kogias,Cavalcanti2015,Zukowski,Jevtic,Costa,Uola,Dai}.

Quantum teleportation \cite{Bennett} is a fundamental and important protocol in quantum information science, by which a sender transfers an unknown quantum state to a receiver using shared entanglement. The sender and the receiver, separated spatially, are only allowed to perform local operations and communicate classically \cite{Bennett1996}. A class of maximally entangled states is chosen in the original teleportation protocol. However, the available states are typically mixed entangled in practice, and they could be used for imperfect teleportation \cite{Popescu}. The resemblance of two quantum states and the properties of quantum teleportation could be quantified by the fidelity, which measures the overlap of the state to be teleported and the output state. Because the state to be teleported is generally unknown, the average fidelity over all input states is used as the standard figure of merit for quantum teleportation \cite{Horodecki,Gisin,Horodecki1996,Horodecki1999,Verstraete,Ghosal}. The average fidelity can indicate whether the entangled mixed states can be used for teleportation.

As quantum correlations are important quantum resources in quantum information and computation, their applications have attracted great interest. Particularly, the relationship between quantum correlations and teleportation has been deeply investigated. Horodecki et al. demonstrated that two-qubit states violating the Bell-CHSH inequality were useful for teleportation if the sender used only the Bell basis in her measurement and the receiver could apply any unitary transformation \cite{Horodecki}. When Bob was restricted to perform the identity or the Pauli rotation operations, Hu et al. found that the X-type states which violated the Bell-CHSH inequality could be used for teleportation \cite{Hu}. The tight lower and upper bounds on the maximal singlet fraction for a given amount of entanglement which was measured by concurrence or negativity were derived \cite{Verstraete2002}. For three-qubit pure states, the partial tangle that represented the residual two-qubit entanglement was closely related to the fidelity of teleportation \cite{Lee}. By connecting geometrical discord to the teleportation fidelity, an operational meaning of it was provided \cite{Adhikari}. Very recently, it was found that two-qubit states violating the three-setting linear steering inequality were useful for teleportation, and the optimal two-qubit states for teleportation for a given value of EPR steering were presented \cite{Fan}. However, it should be noted that the upper bound on the fidelity of quantum teleportation based on EPR steering has not been given yet. It is also interesting to consider whether the X-type states violating the three-setting linear steering inequality could be used for teleportation when the receiver is only allowed to perform the identity or the Pauli rotation operations. Furthermore, the relationship between EPR steering and teleportation fidelity for three-qubit states deserves investigation.

In this paper, we investigate the applications of EPR steering in quantum teleportation. The rest of the paper is arranged as follows. In Sec. II, the definitions of the average teleportation fidelity and EPR steering are reviewed. In Sec. III, the upper bound of the average teleportation fidelity based on EPR steering is derived. In Sec. IV, we show that the X-type states violating the three-setting linear steering inequality are useful for teleportation when the receiver can only apply identity or Pauli rotation operations. Subsequently, we relate EPR steering with the average teleportation fidelity for three-qubit pure states in Sec. V. A conclusion and discussion part is given in Sec. VI.

\section{Preliminaries}\label{sec2}
In this section, the definitions of the average teleportation fidelity and EPR steering are reviewed.

The Hilbert-Schmidt decomposition of a two-qubit density matrix $\rho$ can be represented as
\begin{equation}
\rho=\frac{1}{4}\left(I\otimes I+\vec{a}\cdot\vec{\sigma}\otimes I+I\otimes\vec{b}\cdot\vec{\sigma}+\sum\limits_{i,j=1}^{3}r_{i,j}\sigma_{i}\otimes\sigma_{j}\right),\label{eq1}
\end{equation}
where $\vec{a}$ and $\vec{b}$ are vectors in $\mathbb{R}^3$, $\vec{a}(\vec{b})\cdot\vec{\sigma}=\sum_{i=1}^3a_{i}(b_{i})\sigma_{i}$ with $\sigma_{i}$ being the Pauli matrix. $r_{i,j}={\rm Tr}(\rho \sigma_{i}\otimes\sigma_{j})$ are elements of the correlation matrix $R$.

The average teleportation fidelity measures the efficiency of a two-qubit state $\rho$ used for teleportation \cite{Horodecki1996}
\begin{equation}
f(\rho)=\int_{\mathcal{S}}\sum\limits_{k}p_{k}{\rm Tr}(\rho_{k}\rho_{\phi}) dM(\phi),
\end{equation}
where $\rho_{k}$ is the output state when Alice's measurement outcome is $k$. $\rho_{\phi}$ is the input pure state. The quantity ${\rm Tr}(\rho_{k}\rho_{\phi})$ measures how the resulting state is similar to the input one, and it is averaged over the probabilities of outcomes $p_{k}$. The integral is over a uniform distribution on the Bloch sphere $\mathcal{S}$ of all input states $\rho_{\phi}$. In Ref. \cite{Horodecki}, it was shown that the average teleportation fidelity could be given as follows if a two-qubit state could be used for standard teleportation
\begin{equation}
f(\rho)=\frac{1}{2}\left[1+\frac{N(\rho)}{3}\right],\label{eq3}
\end{equation}
where $N(\rho)=\sum_{i=1}^3\sqrt{u_{i}}$ with $u_{i}$'s being the eigenvalues of the matrix $R^{T}R$. Here, the superscript $T$ denotes the transpose of the matrix.

Three-setting linear steering inequality, which is used to detect the steerability of a state \cite{Cavalcanti} and is also considered as a very useful criterion for the quantification of EPR steering \cite{Costa}. For a two-qubit system, it can be expressed as \cite{Cavalcanti}
\begin{equation}
\frac{1}{\sqrt{3}}\left|\sum\limits_{i=1}^3{\rm Tr}(A_{i}\otimes B_{i}\rho)\right|\leq1,\label{eq4}
\end{equation}
where the Hermitian operators are expressed as $A_{i}=\vec{a}_{i}\cdot\vec{\sigma}$ and $B_{i}=\vec{b}_{i}\otimes\vec{\sigma}$ ($i=1,2,3$), respectively. They act on the corresponding subsystems $A$ and $B$. $\vec{a}_{i}, \vec{b}_{i}\in\mathbb{R}^3$ are two unit vectors, and $\vec{b}_{i}$ are orthogonal vectors. If the inequality given in Eq. (\ref{eq4}) is violated, the two-qubit state $\rho$ is steerable. Generally, the steering observable is given by the maximum violation
\begin{equation}
S(\rho)=\max\limits_{\{A_{i},B_{i}\}}\frac{1}{\sqrt{3}}\left|\sum\limits_{i=1}^3{\rm Tr}(A_{i}\otimes B_{i}\rho)\right|=\sqrt{{\rm Tr}(R^{T}R)}.\label{eq5}
\end{equation}

Based on the Hilbert-Schmidt decomposition of a two-qubit state given in Eq. (\ref{eq1}), the explicit expression of the steering observable is
\begin{equation}
S(\rho)=\sqrt{\sum_{i=1}^{3}u_{i}}.
\end{equation}

Obviously, the states violate the inequality and are steerable if $S(\rho)>1$. However, it should be noted that it is just a sufficient criterion to check steerability because there exist steerable states satisfying $S(\rho)\leq1$.

\section{Upper bound of the average teleportation fidelity}
In Ref. \cite{Verstraete2002}, the authors derived the tight upper and lower bounds of the singlet fraction for given values of concurrence or negativity. The singlet fraction defined for a two-qubit state $\rho$ was $F(\rho)={\rm \max}\langle\Psi|\rho|\Psi\rangle$ with the maximum being taken over all two-qubit maximally entangled states $|\Psi\rangle$. As known to all, the relation between the singlet fraction and the average teleportation fidelity was $f(\rho)=[2F(\rho)+1]/3$. Therefore, it also implied that the upper and lower bounds of the teleportation fidelity for given values of concurrence or negativity were derived \cite{Verstraete2002}. Motivated by these results, the upper bound of the average teleportation fidelity for given values of EPR steering is investigated in this section. Here, the EPR steering is measured by the steering observable and we just consider the case that the two-qubit state is useful for standard teleportation protocol, which implies that the average teleportation fidelity is always larger than $2/3$.

\begin{theorem}
Given a two-qubit state $\rho$ with the steering observable $S(\rho)$, the upper bound of the average teleportation fidelity is
\begin{equation}
f(\rho)\leq\frac{1}{2}\left[1+\frac{S(\rho)}{\sqrt{3}}\right].
\end{equation}
\end{theorem}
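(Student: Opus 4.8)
The plan is to express both sides of the claimed inequality in terms of the eigenvalues $u_i$ of $R^{T}R$ and then reduce the statement to a single elementary inequality between power means. Since we restrict attention to states useful for standard teleportation, Eq.~(\ref{eq3}) gives $f(\rho)=\frac{1}{2}[1+N(\rho)/3]$ with $N(\rho)=\sum_{i=1}^3\sqrt{u_i}$, while the steering observable is $S(\rho)=\sqrt{\sum_{i=1}^3 u_i}$. Substituting these, the target inequality $f(\rho)\le\frac{1}{2}[1+S(\rho)/\sqrt{3}]$ becomes, after cancelling the common factor $\frac{1}{2}$ and the additive $1$, equivalent to $\frac{1}{3}\sum_{i=1}^3\sqrt{u_i}\le\frac{1}{\sqrt{3}}\sqrt{\sum_{i=1}^3 u_i}$.

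Next I would establish this reduced inequality via the Cauchy--Schwarz inequality. Applying Cauchy--Schwarz to the vectors $(\sqrt{u_1},\sqrt{u_2},\sqrt{u_3})$ and $(1,1,1)$ in $\mathbb{R}^3$ yields $\left(\sum_{i=1}^3\sqrt{u_i}\right)^2\le 3\sum_{i=1}^3 u_i$. Since each $u_i\ge 0$, being an eigenvalue of the positive semidefinite matrix $R^{T}R$, taking square roots gives $\sum_{i=1}^3\sqrt{u_i}\le\sqrt{3}\,\sqrt{\sum_{i=1}^3 u_i}=\sqrt{3}\,S(\rho)$. Dividing by $3$ recovers exactly the reduced inequality, and tracing back through the substitution proves the theorem.

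There is no serious analytic obstacle here; the argument is a direct consequence of Cauchy--Schwarz once the two closed forms are in hand. The only points requiring care are making sure the hypotheses of Eq.~(\ref{eq3}) are in force, i.e.\ that $\rho$ is usable for the standard teleportation protocol so that the formula $f(\rho)=\frac{1}{2}[1+N(\rho)/3]$ holds, and noting that equality in Cauchy--Schwarz occurs precisely when $u_1=u_2=u_3$, which characterizes exactly when the derived bound is saturated.
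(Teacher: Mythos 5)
Your proposal is correct and follows essentially the same route as the paper's proof: the paper's key step is exactly the inequality $\left(\sum_{i}\sqrt{u_{i}}\right)^2\leq 3\sum_{i}u_{i}$, which you derive explicitly via Cauchy--Schwarz, combined with the closed forms $N(\rho)=\sum_{i}\sqrt{u_{i}}$ and $S(\rho)=\sqrt{\sum_{i}u_{i}}$. Your added remark that equality holds precisely when $u_{1}=u_{2}=u_{3}$ matches the paper's saturation discussion following the theorem (and is in fact stated more cleanly than the paper's version).
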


\begin{proof}
Because $(\sum_{i}\sqrt{u_{i}})^2\leq3(\sum_{i}u_{i})$, $N(\rho)=\sum_{i}\sqrt{u_{i}}$ is always smaller than $\sqrt{3(\sum_{i}u_{i})}=\sqrt{3}S(\rho)$. Therefore, the upper bound of the average teleportation fidelity is obtained.
\end{proof}


The upper bound of the average teleportation fidelity is saturated iff $u_{1}=u_{2}=u_{3}=S(\rho)/\sqrt{3}$. This result is consistent with the conclusion given in Ref. \cite{Fan}.

In Ref. \cite{Verstraete2002}, the tight upper bound on average teleportation fidelity for given value of concurrence is $[2{\rm \max}\{C,(1+C)/4\}+1]/3\leq f(\rho)\leq(2+C)/3$, in which $C$ is the value of concurrence of the state $\rho$. Now, we can give some examples to indicate that the bound based on the steering observable could be tighter than that based on concurrence.

Ishizaka et al. \cite{Ishizaka,Verstraete2001} introduced a class of two-qubit maximally entangled states (MEMS) \cite{Paulson}
\begin{equation}
\rho_{\rm M}=\lambda_{1}|\psi^{-}\rangle\langle\psi^{-}|+\lambda_{2}|00\rangle\langle00|
+\lambda_{3}|\psi^{+}\rangle\langle\psi^{+}|+\lambda_{4}|11\rangle\langle11|,
\end{equation}
where $|\psi^{\pm}\rangle=(|01\rangle\pm|10\rangle)/\sqrt{2}$ were the maximally entangled states. $\lambda_{i}$ ($i=1,\cdots,4$) were the eigenvalues of the state $\rho_{\rm M}$.

The MEMS state reduces to the state with rank $3$ when one chooses $\lambda_{1}=(1+2p)/3,\ \lambda_{2}=\lambda_{3}=(1-p)/3,$ and $\lambda_{4}=0$, where $p\in[0,1]$. Through straightforward calculation, the steering observable and concurrence of the state are $\sqrt{(1+4p+22p^2)/9}$ and $p$, respectively. The condition $p>(3\sqrt{5}-1)/11$ will ensure the steering observable $S(\rho_{\rm M})$ being larger than $1$, i.e., the state is steerable. The condition can also ensure the state being useful for teleportation because the average teleportation fidelity is larger than $2/3$ \cite{Fan}. When it comes to the upper bound of the teleportation fidelity, we compare the bound based on the steering observable with that based on concurrence in Fig. 1(a). From the figure, it is found that the upper bound of the average teleportation fidelity based on the steering observable is always smaller than that based on concurrence. In other words, a tighter bound is obtained for this state.

The MEMS state reduces to the state with rank $2$ when taking $\lambda_{1}=(1+p)/2,\ \lambda_{2}=(1-p)/2,\ \lambda_{3}=\lambda_{4}=0$ into consideration. For this state, the steering observable and concurrence are $\sqrt{(1+2p+3p^2)/2}$ and  $(1+p)/2$, respectively. Through simple calculation, it is easy to find that the state is steerable with $p\in(1/3,1]$ and, in this region, the state is also useful for teleportation \cite{Fan}. For this state, the upper bound of the average teleportation fidelity as a function of the state parameter $p$ is plotted in Fig. 1(b), and one can note that the upper bound based on the steering observable is always smaller than that based on concurrence. Thus, the upper bound based on the steering observable is much tighter for this state.
\begin{figure}[h]
\begin{center}
\includegraphics[scale=0.5]{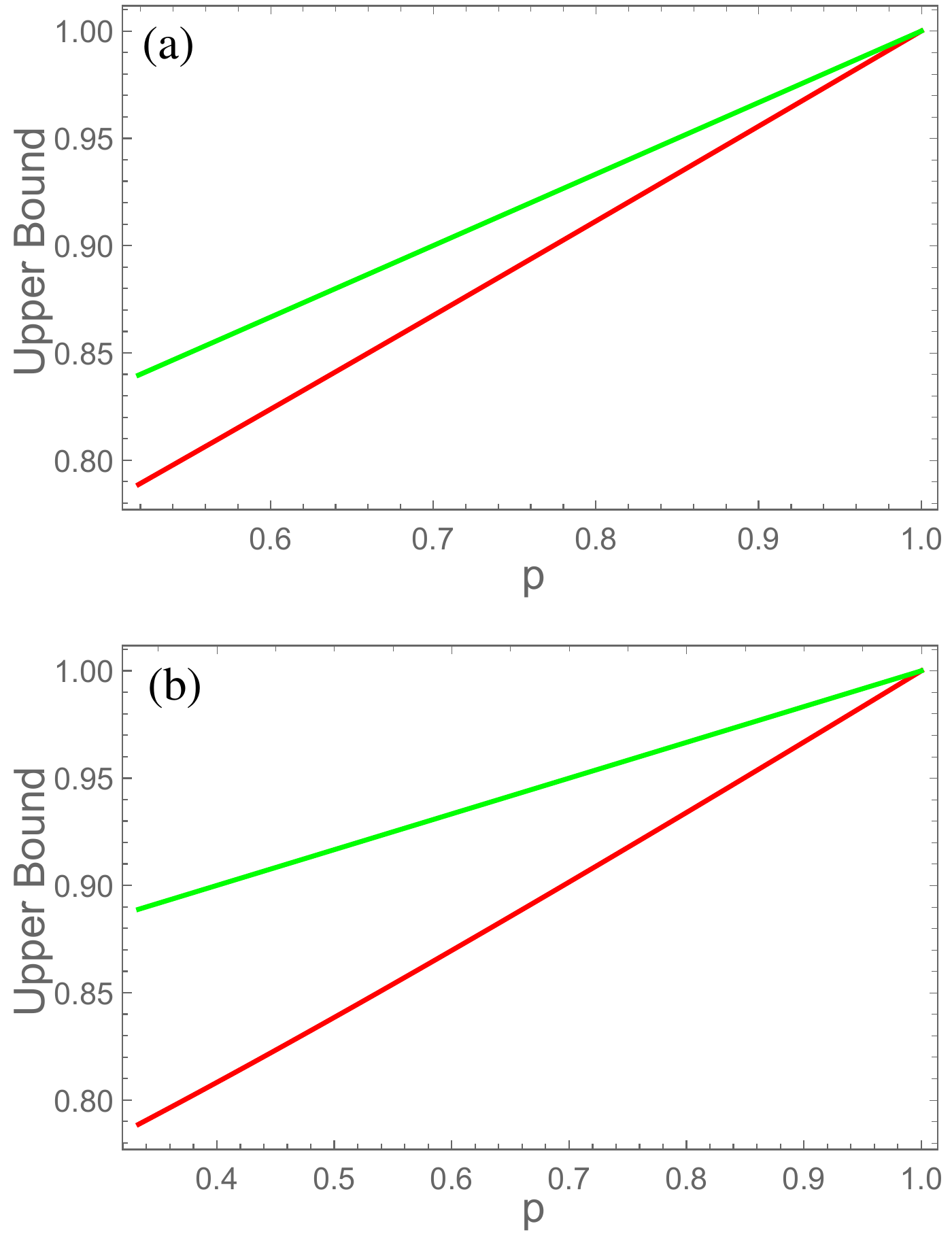}
\end{center}
\caption{The upper bound of the average teleportation fidelity as a function of the state parameter $p$ for the MEMS state with (a) $\lambda_{1}=(1+2p)/3,\ \lambda_{2}=\lambda_{3}=(1-p)/3,$ and $\lambda_{4}=0$, (b) $\lambda_{1}=(1+p)/2,\ \lambda_{2}=(1-p)/2$, and $\lambda_{3}=\lambda_{4}=0$. The red line labels the upper bound of the average teleportation fidelity based on the steering observable, while the green line labels that based on concurrence.}
\end{figure}

Note that the bound of the average teleportation fidelity based on the steering observable is not always tighter than that based on concurrence. For example, the upper bound based on the steering observable equals to that based on concurrence for the optimal state with a definite value of concurrence \cite{Ghosal} or the optimal state with a definite value of the steering observable \cite{Fan}. What's worse, the upper bound obtained in this paper could be larger than that based on concurrence for two-qubit pure state $\alpha|00\rangle\pm\beta|11\rangle$ with the complex parameters $\alpha$, $\beta$ satisfying $|\alpha|^2+|\beta|^2=1$. The physical reasons behind these phenomena may be that the steering observable $S(\rho)$ cannot detect all the steerable states.

\section{X-type states violating the three-setting linear steering inequality being useful for restricted teleportation}
In Ref. \cite{Horodecki}, Horodecki et al. proved that two-qubit states violating the Bell-CHSH inequality were useful for teleportation when the receiver Bob could apply any unitary transformation. However, some unitary transformations may be difficult to perform in real circumstances. In Ref. \cite{Hu}, the authors found that the X-type states violating the Bell-CHSH inequality could be used for teleportation even if Bob was restricted to apply the identity or the Pauli rotation operations. Recently, it was shown that the states violating the three-setting linear steering inequality was useful for teleportation no matter what unitary transformations Bob performed \cite{Fan}. Therefore, a question whether the X-type states violating the three-setting linear steering inequality could be used for teleportation when Bob can only apply the identity or the Pauli rotation operations occurs. An affirmative answer to this question is given in this section.

The X-type state shared between the sender Alice and the receiver Bob could be expressed as follows
\begin{equation}
\rho_{X}=\left(
\begin{array}{llll}
a&0&0&w\\
0&b&z&0\\
a&z^{*}&c&0\\
w^{*}&0&0&d
\end{array}
\right).
\end{equation}
The diagonal elements $a,b,c,d$ are non-negative and satisfy $a+b+c+d=1$ due to the normalization of the density matrix. The anti-diagonal elements $w,z$ are complex and satisfy $|w|^2\leq ad, |z|^2\leq bc$ due to the positive semi-definiteness of the density matrix.

When the receiver Bob can only apply the identity or the Pauli rotation operations, the average teleportation fidelity for the X-type states can be given as \cite{Horodecki1999,Yeo}
\begin{equation}
f(\rho_{X})=\frac{2\mathcal{F}(\rho_{X})+1}{3},
\end{equation}
where the fully entangled fraction \cite{Bowen,Albeveio} is expressed as $\mathcal{F}(\rho_{X})=\max\{\chi_{0},\chi_{1},\chi_2,\chi_3\}$ with $\chi_{0,3}=(a+d\pm2|w|)/2$ and $\chi_{1,2}=(b+c\pm2|z|)/2$. Obviously, $f(\rho_{X})$ is actually determined by the quantity $\chi_{0}$ and $\chi_{1}$ due to the fact $\chi_0\geq\chi_3, \chi_{1}\geq\chi_{2}$.

Now, we discuss the relation between EPR steering and the average teleportation fidelity when Bob can only apply the identity or the Pauli rotation operations. Because $u_{1,2}(\rho_{X})=4(|w|\pm|z|)^2$ and $u_{3}(\rho_{X})=(a+d-b-c)^2$, the steering observable of the X-type state is $S(\rho_{X})=\sqrt{\sum_{i}u_{i}(\rho_{X})}=\sqrt{4(|w|^2+|z|^2)+(a+d-b-c)^2}$.

First, we will show that $\chi_{0}<1/2$ and $\chi_{1}<1/2$ will not be satisfied simultaneously if the X-type states violate the three-setting linear steering inequality, i.e., the steering observable $S(\rho_{X})>1$. For $S(\rho_{X})>1$, one has $4(|w|^2+|z|^2)+(a+d-b-c)^2>1$, which will give
\begin{equation}
\begin{array}{ll}
4(|w|^2+|z|^2)&>1-(a+d-b-c)^2\\
&=(a+b+c+d)^2-(a+d-b-c)^2\\
&=4(a+d)(b+c).
\end{array}
\nonumber
\end{equation}
Thus, the inequality $|w|^2+|z|^2>(a+d)(b+c)$ is obtained when the X-type states violate the three-setting linear steering inequality.

On the other hand, if $\chi_{0}<1/2$ and $\chi_{1}<1/2$ are satisfied simultaneously, one will get $2|w|<b+c$ and $2|z|<a+d$ from the expressions of $\chi_{0}$ and $\chi_{1}$. From the positive semi-definiteness, one has $2|w|\leq2\sqrt{ad}\leq a+d$ and $2|z|\leq2\sqrt{bc}\leq b+c$. Combining them together, one will find $|w|+|z|<b+c$ and $|w|+|z|<a+d$. Therefore, the result $|w|^2+|z|^2<(|w|+|z|)^2<(a+d)(b+c)$ is obtained, which is in contradiction with the result getting from $S(\rho_{X})>1$. Therefore, one can see that the X-type states will not violate the three-setting linear steering inequality if $\chi_{0}<1/2$ and $\chi_{1}<1/2$.

Second, $\chi_{0}>1/2$ and $\chi_{1}>1/2$ will not also be satisfied simultaneously \cite{Hu}. If these two inequalities are satisfied simultaneously, one will get $\chi_{0}+\chi_{1}>1$, from which $|w|+|z|>1/2$ is obtained.  This result is in contradiction with $|w|+|z|<(a+b+c+d)/2=1/2$, which can be obtained from $|w|<(a+d)/2$ and $|z|<(b+c)/2$ due to the positive semi-definiteness of the X-type states.

Based on the above discussion, one can get the following result.

\begin{theorem}The X-type states violating the three-setting linear steering inequality are useful for teleportation even if the receiver can only perform the identity or the Pauli rotation operations.
\end{theorem}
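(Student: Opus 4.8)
The plan is to translate "useful for teleportation" into a single inequality on the fully entangled fraction and then read off the conclusion from the two facts already assembled in the preceding discussion. Since a restricted receiver achieves $f(\rho_{X})=[2\mathcal{F}(\rho_{X})+1]/3$ with $\mathcal{F}(\rho_{X})=\max\{\chi_{0},\chi_{1}\}$, the state beats the classical threshold, $f(\rho_{X})>2/3$, exactly when $\mathcal{F}(\rho_{X})>1/2$, i.e. when at least one of $\chi_{0},\chi_{1}$ strictly exceeds $1/2$. So the whole theorem reduces to showing that $\max\{\chi_{0},\chi_{1}\}>1/2$ whenever $S(\rho_{X})>1$.

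First I would collect the two ingredients derived above. From $S(\rho_{X})>1$ the strict inequality $|w|^{2}+|z|^{2}>(a+d)(b+c)$ holds, and a short contradiction argument shows $\chi_{0}$ and $\chi_{1}$ cannot both lie below $1/2$; independently, for any X-type state the positive semi-definiteness forces $|w|+|z|<1/2$, so $\chi_{0}$ and $\chi_{1}$ cannot both lie above $1/2$. Together these pin down the qualitative picture: exactly one of $\chi_{0},\chi_{1}$ sits on the high side of $1/2$, which also tells Bob which single Pauli correction (or the identity) is optimal, and this is precisely why the restricted operation set suffices.

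The hard part will be strictness. The two facts as literally stated yield only $\mathcal{F}(\rho_{X})\geq1/2$, hence $f(\rho_{X})\geq2/3$, whereas genuine usefulness requires the strict inequality $f(\rho_{X})>2/3$. To close this gap I would rerun the first contradiction argument with the weak hypotheses $\chi_{0}\leq1/2$ and $\chi_{1}\leq1/2$ in place of strict ones. These give $2|w|\leq b+c$ and $2|z|\leq a+d$, which combined with the positive-semidefiniteness bounds $2|z|\leq b+c$ and $2|w|\leq a+d$ yield $|w|+|z|\leq b+c$ and $|w|+|z|\leq a+d$, hence $|w|^{2}+|z|^{2}\leq(|w|+|z|)^{2}\leq(a+d)(b+c)$. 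This contradicts the strict inequality $|w|^{2}+|z|^{2}>(a+d)(b+c)$ coming from $S(\rho_{X})>1$, so the assumption fails and $\max\{\chi_{0},\chi_{1}\}>1/2$.

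Finally I would assemble the pieces: $\max\{\chi_{0},\chi_{1}\}>1/2$ gives $\mathcal{F}(\rho_{X})>1/2$ and therefore $f(\rho_{X})=[2\mathcal{F}(\rho_{X})+1]/3>2/3$, establishing that the X-type state is useful for teleportation under the restricted operations. The crucial point throughout is that the \emph{strict} violation of the three-setting linear steering inequality feeds directly into the strict separation of the fully entangled fraction from $1/2$; the positive-semidefiniteness constraints serve only to convert the algebraic condition on $|w|,|z|$ into the corresponding bounds on $\chi_{0}$ and $\chi_{1}$.
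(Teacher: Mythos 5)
Your proposal is correct and follows the paper's route step for step: the same reduction of usefulness to $\mathcal{F}(\rho_{X})=\max\{\chi_{0},\chi_{1}\}>1/2$, the same derivation of $|w|^{2}+|z|^{2}>(a+d)(b+c)$ from $S(\rho_{X})>1$ via $1-(a+d-b-c)^{2}=4(a+d)(b+c)$, and the same contradiction built from bounding $|w|+|z|$ by both $a+d$ and $b+c$. The one genuine difference is a refinement that actually improves on the paper's write-up: the paper only excludes the case that $\chi_{0}<1/2$ and $\chi_{1}<1/2$ hold simultaneously, which as literally stated yields merely $\mathcal{F}(\rho_{X})\geq1/2$ and hence $f(\rho_{X})\geq2/3$, leaving open the borderline case $f(\rho_{X})=2/3$. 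Your rerun of the contradiction under the weak hypotheses $\chi_{0}\leq1/2$, $\chi_{1}\leq1/2$, giving $|w|^{2}+|z|^{2}\leq(|w|+|z|)^{2}\leq(a+d)(b+c)$ against the \emph{strict} inequality forced by the steering violation, closes this strictness gap cleanly and delivers $f(\rho_{X})>2/3$. You are also right that the second fact invoked by the paper (following Hu), namely that $\chi_{0}$ and $\chi_{1}$ cannot both exceed $1/2$ since $|w|+|z|\leq1/2$ by positive semi-definiteness, is not needed for the usefulness claim itself; it only supplies the structural observation that exactly one of the two candidate corrections is optimal.
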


\section{Relating EPR steering with teleportation fidelity for three-qubit pure states}
In the above two sections, we consider the relationship between EPR steering and average teleportation fidelity in two-qubit system, and now we extend the investigation of the relationship to multipartite system, particularly, three-qubit system. The three-qubit pure state can be expressed as
\begin{equation}
|\psi\rangle_{123}=\alpha_0|000\rangle+\alpha_{1}e^{i\theta}|100\rangle+\alpha_{2}|101\rangle+\alpha_{3}|110\rangle+\alpha_{4}|111\rangle,
\end{equation}
where $\alpha_{i}\geq0$, $\sum_{i}\alpha_{i}^2=1$, and $\theta\in[0,\pi]$ is a phase.

The teleportation scheme over the three-qubit state is described as follows \cite{Lee}. First, Alice makes one-qubit orthogonal measurement on the subsystem $i$. Second, Bob makes two-qubit orthogonal measurements on the qubit whose state is to be teleported and the subsystem $j$. In the end, Charlie applies a corresponding unitary operation on the subsystem $k$ after receiving three bits classical information on the results of the two measurements. Here, $i,j,k=1,2,3$ and $i\neq j\neq k$. The above scheme is actually a standard teleportation scheme based on the reduced state $\rho_{jk}$ after the measurement on the subsystem $i$.

In Ref. \cite{Lee}, the authors introduced the definition of the partial tangle to relate it with the average teleportation fidelity, and the partial tangle was defined as
\begin{equation}
\tau_{ij}=\sqrt{C_{i(jk)}^2-C_{ik}^2},
\end{equation}
where $C_{i(jk)}=2\sqrt{\det{[{\rm Tr_{jk}}(|\psi\rangle_{ijk}\langle\psi|)]}}$. $C_{ik}=C[{\rm Tr_j}(|\psi\rangle_{ijk}\langle\psi|)]$ was the concurrence of the reduced two-qubit state $\rho_{ik}$.
Through straightforward calculation, the partial tangle for three-qubit pure states can be explicitly given as \cite{Lee}
\begin{equation}
\begin{array}{c}
\tau_{12}=2\alpha_0\sqrt{\alpha_{3}^2+\alpha_{4}^2},\\
\tau_{23}=2\sqrt{\alpha_0^2\alpha_4^2+\alpha_1^2\alpha_4^2+\alpha_{2}^2\alpha_{3}^2-2\alpha_1\alpha_2\alpha_3\alpha_4\cos{\theta}},\\
\tau_{13}=2\alpha_0\sqrt{\alpha_{2}^2+\alpha_{4}^2}.\label{eq13}
\end{array}
\end{equation}
The relationship between the partial tangle and the average teleportation fidelity for the corresponding subsystem had been explicitly expressed as \cite{Lee}
\begin{equation}
\tau_{ij}=3f(\rho_{ij})-2.\label{eq14}
\end{equation}

The steering observables for the corresponding subsystems of three-qubit pure states can be calculated as \cite{Paul}
\begin{equation}
\begin{array}{ll}
S(\rho_{12})=&\left\{1+8\alpha_{0}^2\alpha_{3}^2-4\alpha_{0}^2\alpha_{2}^2-4\alpha_{1}^2\alpha_{4}^2-4\alpha_{2}^2\alpha_{3}^2\right.\\
&\left.+8\alpha_{1}\alpha_{2}\alpha_{3}\alpha_{4}\cos{\theta}\right\}^{\frac{1}{2}},\\
S(\rho_{13})=&\left\{1+8\alpha_{0}^2\alpha_{2}^2-4\alpha_{0}^2\alpha_{3}^2-4\alpha_{1}^2\alpha_{4}^2-4\alpha_{2}^2\alpha_{3}^2\right.\\
&\left.+8\alpha_{1}\alpha_{2}\alpha_{3}\alpha_{4}\cos{\theta}\right\}^{\frac{1}{2}},\\
S(\rho_{23})=&\left\{1-4\alpha_{0}^2\alpha_{2}^2-4\alpha_{0}^2\alpha_{3}^2+8\alpha_{1}^2\alpha_{4}^2+8\alpha_{2}^2\alpha_{3}^2\right.\\
&\left.-16\alpha_{1}\alpha_{2}\alpha_{3}\alpha_{4}\cos{\theta}\right\}^{\frac{1}{2}}.\label{eq15}
\end{array}
\end{equation}

The first result between the steering observable and the average teleportation fidelity for three-qubit pure states is given as follows.
\begin{theorem}
The triples $(f(\rho_{12}),f(\rho_{13}),f(\rho_{23}))$ of three reduced states obtained from a pure three-qubit state and $(S(\rho_{12}),S(\rho_{13}),S(\rho_{23}))$ maintain the same ordering, i.e., $f(\rho_{12})>f(\rho_{13})>f(\rho_{23})$ iff $S(\rho_{12})>S(\rho_{13})>S(\rho_{23})$.
\end{theorem}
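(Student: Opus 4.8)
The plan is to collapse both orderings onto a single, directly comparable quantity. By Eq.~(\ref{eq14}), $f(\rho_{ij})=(\tau_{ij}+2)/3$ is a strictly increasing affine function of the partial tangle, so the ordering of $(f(\rho_{12}),f(\rho_{13}),f(\rho_{23}))$ coincides with that of $(\tau_{12},\tau_{13},\tau_{23})$. Moreover, since $S(\rho_{ij})\ge 0$ and $\tau_{ij}\ge 0$, comparing these quantities is equivalent to comparing their squares. It therefore suffices to prove that the triples $(S(\rho_{12})^2,S(\rho_{13})^2,S(\rho_{23})^2)$ and $(\tau_{12}^2,\tau_{13}^2,\tau_{23}^2)$ carry the same ordering.

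The core of the argument is a pairwise comparison of differences. Using the closed forms in Eq.~(\ref{eq13}) for $\tau_{ij}^2$ and in Eq.~(\ref{eq15}) for $S(\rho_{ij})^2$, I would expand each of the three differences $S(\rho_{ij})^2-S(\rho_{kl})^2$ and $\tau_{ij}^2-\tau_{kl}^2$. In every pair the constant $1$ and the terms common to both squared steering observables cancel, leaving the clean identity
\[
S(\rho_{ij})^2-S(\rho_{kl})^2=3\left(\tau_{ij}^2-\tau_{kl}^2\right).
\]
For instance, the $(12)$--$(13)$ pair gives $S(\rho_{12})^2-S(\rho_{13})^2=12\alpha_0^2(\alpha_3^2-\alpha_2^2)$ and $\tau_{12}^2-\tau_{13}^2=4\alpha_0^2(\alpha_3^2-\alpha_2^2)$, while the two remaining pairs exhibit the same factor-three proportionality, including the interference contributions proportional to $\alpha_1\alpha_2\alpha_3\alpha_4\cos\theta$.

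Because the proportionality constant is the positive number $3$, the differences $S(\rho_{ij})^2-S(\rho_{kl})^2$ and $\tau_{ij}^2-\tau_{kl}^2$ always carry the same sign. Chaining the equivalences---fidelity ordering $\Leftrightarrow$ partial-tangle ordering $\Leftrightarrow$ squared-tangle ordering $\Leftrightarrow$ squared-steering ordering $\Leftrightarrow$ steering ordering---then yields the stated biconditional. This in fact proves more than claimed: the two triples are order-isomorphic under \emph{every} pattern of strict inequalities among their components, not merely the descending chain $(12)>(13)>(23)$ singled out in the statement.

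The step I expect to be the main obstacle is purely algebraic: expanding the three squared steering observables of Eq.~(\ref{eq15}) while keeping careful track of the opposite-signed interference terms ($+8\alpha_1\alpha_2\alpha_3\alpha_4\cos\theta$ for $S(\rho_{12})$ and $S(\rho_{13})$ versus $-16\alpha_1\alpha_2\alpha_3\alpha_4\cos\theta$ for $S(\rho_{23})$), and confirming that all shared pieces cancel so that the exact factor of three emerges uniformly across all three pairs. Once this identity is verified, the monotonicity bookkeeping that finishes the proof is immediate.
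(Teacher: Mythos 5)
Your proof is correct and takes essentially the same route as the paper's: both reduce the fidelity ordering to the partial-tangle ordering via $\tau_{ij}=3f(\rho_{ij})-2$ and then compare the pairwise differences of squares computed from Eqs.~(\ref{eq13}) and (\ref{eq15}). Your explicit uniform identity $S^2(\rho_{ij})-S^2(\rho_{kl})=3\left(\tau_{ij}^2-\tau_{kl}^2\right)$ for all three pairs (which does check out, interference terms included) simply makes precise what the paper dispatches with ``similarly'' after exhibiting the $(12)$--$(13)$ case.
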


\begin{proof}
From Eqs. (\ref{eq13}) and (\ref{eq15}), one will get $S^2(\rho_{12})-S^2(\rho_{13})=12\alpha_{0}^2(\alpha_{3}^2-\alpha_{2}^2)$ and $\tau_{12}^2-\tau_{13}^2=4\alpha_{0}^2(\alpha_{3}^2-\alpha_{2}^2)$. Thus, $\tau_{12}>\tau_{13}$ if $S(\rho_{12})>S(\rho_{13})$. Due to the fact $f(\rho_{ij})\geq2/3$ for three-qubit pure states and the relation given in Eq. (\ref{eq14}), one will get the result that $f(\rho_{12})>f(\rho_{13})$ iff $S(\rho_{12})>S(\rho_{13})$. Similarly, one can prove that $f(\rho_{12})>f(\rho_{23})$ iff $S(\rho_{12})>S(\rho_{23})$, and $f(\rho_{13})>f(\rho_{23})$ iff $S(\rho_{13})>S(\rho_{23})$. Therefore, the proof is completed.
\end{proof}

For three-qubit pure states, the result concludes that the larger the EPR steering of the reduced state is, the greater the average teleportation fidelity of the reduced state is.

The second result between the steering observable and the average teleportation fidelity for three-qubit pure states is that there is a complementary relation.
\begin{theorem}
For a three-qubit pure state, the steering observables of two reduced two-qubit states and the average teleportation fidelity of the third reduced two-qubit state obey the following complementary relation
\begin{equation}
S^2(\rho_{ij})+S^2(\rho_{ik})+[3f(\rho_{jk})-2]^2\leq3.
\end{equation}
\end{theorem}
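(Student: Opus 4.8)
The plan is to reduce the inequality to an elementary bound on the squared amplitudes $\alpha_i^2$ by substituting the closed forms already recorded in the excerpt. First I would use the relation $3f(\rho_{jk})-2=\tau_{jk}$ from Eq.~(\ref{eq14}) to rewrite the left-hand side as $S^2(\rho_{ij})+S^2(\rho_{ik})+\tau_{jk}^2$, so that every term is a known polynomial in the $\alpha_i$ and $\cos\theta$ through Eqs.~(\ref{eq13}) and~(\ref{eq15}). Adding the three expressions, the two $S^2$ terms each contribute a constant $1$, so the task becomes showing that the remaining polynomial is at most $1$.

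Because the parametrization of $|\psi\rangle_{123}$ singles out qubit $1$, the three choices of the distinguished index $i$ are not manifestly symmetric, and I would treat $i=1$ separately from $i=2,3$. For $i=1$ the sum $S^2(\rho_{12})+S^2(\rho_{13})+\tau_{23}^2$ collapses, after maximizing over the phase (here the coefficient of $\cos\theta$ is nonnegative, so the extremum is at $\cos\theta=1$), to $2+4\alpha_0^2(\alpha_2^2+\alpha_3^2+\alpha_4^2)-4(\alpha_1\alpha_4-\alpha_2\alpha_3)^2$; discarding the nonpositive square and applying the arithmetic--geometric mean inequality $4\alpha_0^2(\alpha_2^2+\alpha_3^2+\alpha_4^2)\le(\alpha_0^2+\alpha_2^2+\alpha_3^2+\alpha_4^2)^2\le1$ finishes this case.

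For $i=2$ (and symmetrically $i=3$) the coefficient of $\cos\theta$ is nonpositive, so the phase maximum occurs at $\cos\theta=-1$ and the cross terms assemble into $+4(\alpha_1\alpha_4+\alpha_2\alpha_3)^2$. Here the key step is Cauchy--Schwarz, $(\alpha_1\alpha_4+\alpha_2\alpha_3)^2\le(\alpha_1^2+\alpha_2^2)(\alpha_3^2+\alpha_4^2)$, which lets me factor out the common block $\alpha_3^2+\alpha_4^2$; after dropping the remaining nonpositive term $-4\alpha_0^2\alpha_2^2$ the bound becomes $4(\alpha_3^2+\alpha_4^2)(\alpha_0^2+\alpha_1^2+\alpha_2^2)\le1$, again by arithmetic--geometric mean together with $\sum_i\alpha_i^2=1$.

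The main obstacle I anticipate is organizational rather than conceptual: the algebraic cancellation that produces the clean constant $2$ and the perfect-square cross terms is lengthy, and one must keep the sign of the $\cos\theta$ coefficient straight in each case to know whether to complete the square as $(\alpha_1\alpha_4-\alpha_2\alpha_3)^2$ or to invoke Cauchy--Schwarz on $(\alpha_1\alpha_4+\alpha_2\alpha_3)^2$. It is also worth recording that equality is attainable --- the GHZ amplitudes $\alpha_0=\alpha_4=1/\sqrt2$ saturate the $i=2,3$ bounds --- which both confirms the tightness of the complementary relation and serves as a check on the algebra.
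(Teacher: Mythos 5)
Your proof is correct, and for the case the paper actually displays ($S^2(\rho_{12})+S^2(\rho_{13})+\tau_{23}^2\leq 3$) it is essentially the paper's argument in light disguise: the paper keeps $\cos\theta$ and writes the sum as $3-(2\alpha_0^2-1)^2-4\alpha_0^2\alpha_1^2-4(\alpha_1\alpha_4-\alpha_2\alpha_3)^2-8\alpha_1\alpha_2\alpha_3\alpha_4(1-\cos\theta)$, a sum of $3$ minus manifestly nonnegative terms, which is the same as your ``set $\cos\theta=1$, complete the square, then apply $4xy\leq(x+y)^2$'' (incidentally, the paper's printed square $(\alpha_1\alpha_4-\alpha_2\alpha_4)^2$ is a typo for $(\alpha_1\alpha_4-\alpha_2\alpha_3)^2$, i.e., your version). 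Where you genuinely add something is in the remaining permutations: the paper disposes of $S^2(\rho_{12})+S^2(\rho_{23})+\tau_{13}^2\leq3$ and $S^2(\rho_{13})+S^2(\rho_{23})+\tau_{12}^2\leq3$ with a bare ``similarly,'' but as you correctly observe the parametrization distinguishes qubit $1$, the $\cos\theta$ coefficient flips sign in these cases, and the cross terms assemble into the \emph{positive} square $+4(\alpha_1\alpha_4+\alpha_2\alpha_3)^2$ that cannot simply be discarded; your Cauchy--Schwarz step $(\alpha_1\alpha_4+\alpha_2\alpha_3)^2\leq(\alpha_1^2+\alpha_2^2)(\alpha_3^2+\alpha_4^2)$, followed by factoring out $\alpha_3^2+\alpha_4^2$ and one more AM--GM, is exactly the missing estimate, and the $i=3$ case follows by the swap $\alpha_2\leftrightarrow\alpha_3$. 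So your write-up is not just correct but more complete than the paper's proof, and your GHZ saturation check ($\alpha_0=\alpha_4=1/\sqrt2$ gives all three sums equal to $3$) is consistent with the paper's saturating family $\sqrt{0.5}\,|000\rangle+\sqrt{0.5-q^2}\,|101\rangle+q|111\rangle$.
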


\begin{proof}
The proof is given as follows. From Eqs. (\ref{eq13}) and (\ref{eq15}), one can obtain
\begin{equation}
\begin{array}{l}
S^2(\rho_{12})+S^2(\rho_{13})+\tau_{23}^2\\
=2+4\alpha_{0}^2(\alpha_{2}^2+\alpha_{3}^2+\alpha_{4}^2)-4\alpha_{1}^2\alpha_{4}^2-4\alpha_{2}^2\alpha_{3}^2\\
+8\alpha_{1}\alpha_{2}\alpha_{3}\alpha_{4}\cos{\theta}\\
=2+4\alpha_{0}^2(1-\alpha_{0}^2-\alpha_{1}^2)-4(\alpha_{1}\alpha_{4}-\alpha_{2}\alpha_{4})^2\\
-8\alpha_{1}\alpha_{2}\alpha_{3}\alpha_{4}(1-\cos{\theta})\\
=3-(2\alpha_0^2-1)^2-4\alpha_0^2\alpha_1^2-4(\alpha_{1}\alpha_{4}-\alpha_{2}\alpha_{4})^2\\
-8\alpha_{1}\alpha_{2}\alpha_{3}\alpha_{4}(1-\cos{\theta})\\
\leq3,
\end{array}
\end{equation}
where the normalization of the three-qubit pure states, i.e., $\sum_{i}\alpha_{i}^2=1$, is used. Noting $\tau_{23}=3f(\rho_{23})-2$, one could obtain the complementary relationship $S^2(\rho_{12})+S^2(\rho_{13})+[3f(\rho_{23})-2]^2\leq3$. Similarly, one can also give $S^2(\rho_{12})+S^2(\rho_{23})+[3f(\rho_{13})-2]^2\leq3$ and $S^2(\rho_{13})+S^2(\rho_{23})+[3f(\rho_{12})-2]^2\leq3$. Thus, the proof of the second result is completed.
\end{proof}

From the second result, one can see that the steering observables of two reduced two-qubit states and the average teleportation fidelity of the third reduced two-qubit state are upper bounded. On the other hand, this is a trade-off relation among the steerable observables and the average teleportation fidelity for three-qubit pure states.

With the similar procedure, one can also obtain another complementary relation between the steering observable and the average teleportation fidelity.
\begin{theorem}
For a three-qubit pure state, the steering observable of a reduced two-qubit state and the average teleportation fidelities of the other two reduced two-qubit states obey the following complementary relation
\begin{equation}
S^2(\rho_{ij})+[3f(\rho_{ik})-2]^2+[3f(\rho_{jk})-2]^2\leq3.
\end{equation}
\end{theorem}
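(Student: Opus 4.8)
The plan is to first use Eq.~(\ref{eq14}) to rewrite every fidelity factor as a partial tangle, $3f(\rho_{ik})-2=\tau_{ik}$ and $3f(\rho_{jk})-2=\tau_{jk}$, so that the assertion becomes the purely algebraic inequality $S^2(\rho_{ij})+\tau_{ik}^2+\tau_{jk}^2\leq3$. This is structurally identical to the preceding theorem (two steering observables plus one tangle), so I would reuse the same machinery: fix one representative ordering of the indices, substitute the closed forms of $S^2$ from Eq.~(\ref{eq15}) and of $\tau^2$ from Eq.~(\ref{eq13}), cancel terms, and close the estimate with the normalization $\sum_i\alpha_i^2=1$.

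For the representative case I would take $S^2(\rho_{12})+\tau_{13}^2+\tau_{23}^2$. Inserting $S^2(\rho_{12})$ from Eq.~(\ref{eq15}) together with $\tau_{13}^2=4\alpha_0^2(\alpha_2^2+\alpha_4^2)$ and $\tau_{23}^2=4(\alpha_0^2\alpha_4^2+\alpha_1^2\alpha_4^2+\alpha_2^2\alpha_3^2-2\alpha_1\alpha_2\alpha_3\alpha_4\cos\theta)$, I expect a clean collapse: the phase term $8\alpha_1\alpha_2\alpha_3\alpha_4\cos\theta$ is annihilated by $\tau_{23}^2$, and the quartic pieces in $\alpha_0^2\alpha_2^2$, $\alpha_1^2\alpha_4^2$ and $\alpha_2^2\alpha_3^2$ each cancel against their counterparts, leaving only $1+8\alpha_0^2(\alpha_3^2+\alpha_4^2)$. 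Setting $x=\alpha_0^2$, $y=\alpha_3^2+\alpha_4^2$, normalization gives $x+y\leq1$, whence $8xy\leq2(x+y)^2\leq2$; equivalently $1-4\alpha_0^2(\alpha_3^2+\alpha_4^2)\geq(\alpha_0^2-\alpha_3^2-\alpha_4^2)^2\geq0$. Either way $1+8\alpha_0^2(\alpha_3^2+\alpha_4^2)\leq3$, which is the claim.

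Finally I would dispatch the other two orderings. Keeping $S(\rho_{13})$ instead gives the same pattern, reducing to $1+8\alpha_0^2(\alpha_2^2+\alpha_4^2)\leq3$ by the identical AM--GM step. The genuinely different, and hardest, case is the one that retains $S(\rho_{23})$: here the combination simplifies to $1+8\alpha_0^2\alpha_4^2+8\alpha_1^2\alpha_4^2+8\alpha_2^2\alpha_3^2-16\alpha_1\alpha_2\alpha_3\alpha_4\cos\theta$, and the phase term no longer cancels. I expect this to be the main obstacle. I would handle it by noting the expression is largest at $\cos\theta=-1$, where it becomes $1+8\alpha_0^2\alpha_4^2+8(\alpha_1\alpha_4+\alpha_2\alpha_3)^2$, then applying Cauchy--Schwarz $(\alpha_1\alpha_4+\alpha_2\alpha_3)^2\leq(\alpha_1^2+\alpha_2^2)(\alpha_3^2+\alpha_4^2)$ and the elementary bound $\alpha_0^2\alpha_4^2+(\alpha_1^2+\alpha_2^2)(\alpha_3^2+\alpha_4^2)\leq1/4$ under $\sum_i\alpha_i^2=1$. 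This last step, maximizing $s(1-s)$ after grouping the amplitudes, is the only nontrivial estimate; with it the bound $\leq3$ holds for this ordering as well, completing the proof.
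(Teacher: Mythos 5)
Your proof is correct, and for the case the paper actually works out it follows the same route: substituting Eqs.~(\ref{eq13}) and (\ref{eq15}) into $S^2(\rho_{12})+\tau_{13}^2+\tau_{23}^2$ collapses all phase and cross terms to give $1+8\alpha_0^2(\alpha_3^2+\alpha_4^2)$, which the paper bounds by completing the square, $3-8(\alpha_0^2-\tfrac{1}{2})^2-8\alpha_0^2\alpha_1^2-8\alpha_0^2\alpha_2^2\leq3$, while you use the equivalent elementary step $8xy\leq2(x+y)^2\leq2$ with $x=\alpha_0^2$, $y=\alpha_3^2+\alpha_4^2$. Where you genuinely go beyond the paper is the third permutation. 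The paper dismisses $S^2(\rho_{23})+\tau_{12}^2+\tau_{13}^2\leq3$ with a bare ``similarly,'' but as you correctly identify it is not similar in the mechanical sense: the phase term does not cancel, and the sum reduces to $1+8\alpha_0^2\alpha_4^2+8\alpha_1^2\alpha_4^2+8\alpha_2^2\alpha_3^2-16\alpha_1\alpha_2\alpha_3\alpha_4\cos\theta$, which requires a new estimate. Your chain of bounds is sound: since all $\alpha_i\geq0$ the expression is maximized at $\cos\theta=-1$, where it becomes $1+8\alpha_0^2\alpha_4^2+8(\alpha_1\alpha_4+\alpha_2\alpha_3)^2$; Cauchy--Schwarz gives $(\alpha_1\alpha_4+\alpha_2\alpha_3)^2\leq(\alpha_1^2+\alpha_2^2)(\alpha_3^2+\alpha_4^2)$; and with $\alpha_4^2\leq\alpha_3^2+\alpha_4^2$ one gets $\alpha_0^2\alpha_4^2+(\alpha_1^2+\alpha_2^2)(\alpha_3^2+\alpha_4^2)\leq(\alpha_0^2+\alpha_1^2+\alpha_2^2)(\alpha_3^2+\alpha_4^2)=s(1-s)\leq\tfrac{1}{4}$ under normalization, so the sum is at most $3$. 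One simplification worth noting: by Eq.~(\ref{eq13}) your reduced expression for this case is exactly $1+2\tau_{23}^2$, so the bound also follows in one line from $\tau_{23}\leq1$; either way, your write-up supplies the argument the paper's ``similarly'' glosses over, and in that respect is more complete than the published proof.
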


\begin{proof}
The proof is given as follows. Based on the Eqs. (\ref{eq13}) and (\ref{eq15}), one will have
\begin{equation}
\begin{array}{l}
S^2(\rho_{12})+\tau_{13}^2+\tau_{23}^2\\
=1+8\alpha_0^2(\alpha_{3}^2+\alpha_4^2)\\
=1+8\alpha_0^2(1-\alpha_0^2-\alpha^2_1-\alpha_2^2)\\
=3-8(\alpha_0^2-\frac{1}{2})^2-8\alpha_0^2\alpha^2_1-8\alpha_0^2\alpha^2_2\\
\leq3.
\end{array}
\end{equation}
Combining the above inequality and $\tau_{13}=3f(\rho_{13})-2$, $\tau_{23}=3f(\rho_{23})-2$, the inequality $S^2(\rho_{12})+[3f(\rho_{13})-2]^2+[3f(\rho_{23})-2]^2\leq3$ is obtained. Similarly, one can get the other two inequalities $S^2(\rho_{13})+[3f(\rho_{12})-2]^2+[3f(\rho_{23})-2]^2\leq3$ and $S^2(\rho_{23})+[3f(\rho_{12})-2]^2+[3f(\rho_{13})-2]^2\leq3$. Therefore, the proof of the third result is completed.
\end{proof}

The result indicates that the steering observable of a reduced two-qubit state and the average teleportation fidelities of the other two reduced two-qubit state are upper bounded. Obviously, this is also a trade-off relation among the steerable observable and the average teleportation fidelities for three-qubit pure states.

Now, we identify a class of genuinely entangled states  $|\phi\rangle_{123}=\sqrt{0.5}|000\rangle+\sqrt{0.5-q^2}|101\rangle+q|111\rangle$ with $q\in(0,\sqrt{0.5})$. For this class of states, $S(\rho_{12})=S(\rho_{23})=\sqrt{2}q$, $S(\rho_{13})=\sqrt{3-4q^2}$, $f(\rho_{12})=f(\rho_{23})=(2+\sqrt{2}q)/3$ and $f(\rho_{13})=1$. These results will saturate the inequalities given in Eqs. (16) and (18), i.e., $S^2(\rho_{12})+S^2(\rho_{13})+[3f(\rho_{23})-2]^2=3$, $S^2(\rho_{13})+S^2(\rho_{23})+[3f(\rho_{12})-2]^2=3$, and $S^2(\rho_{13})+[3f(\rho_{12})-2]^2+[3f(\rho_{23})-2]^2=3$.

\section{Conclusion and discussion}
In this paper, we consider the applications of EPR steering in quantum teleportation. The upper bound of the average teleportation fidelity based on the EPR steering is derived. When the receiver can only perform the identity or the Pauli rotation operations, the X-type states violating the three-setting linear steering inequality could always be useful for teleportation. For the three-qubit pure states, the steering observables and the average teleportation fidelities of the reduced two-qubit systems maintain the same ordering. Furthermore, the complementary relations among the steering observables of two reduced two-qubit states and the average teleportation fidelity of the third reduced two-qubit state, or among the steering observable of a reduced two-qubit state and the average teleportation fidelities of the other two reduced two-qubit states for the three-qubit pure states are presented.

There are steerable states which do not violate the three-setting linear steering inequality. Thus, it is an open question to give an appropriate steering criterion and then consider the applications of EPR steering in quantum teleportation. Furthermore, as EPR steering is an important quantum resource in quantum information and computation, it will also be interesting to investigate its applications in other quantum information tasks.

\section*{ACKNOWLEDGEMENTS} The work was supported by the Fundamental Research Funds for the Central Universities under Grant No. 2020ZDPYMS03.

\end{document}